\newcommand{\NN}{\mathbb{N}} 
\newcommand{\tp}{\mathsf{TreePartitioning}}
\newcommand{\neighbors}{\mathsf{Neighbors}}
\newcommand{\pieces}{\mathsf{Pieces}}
\newcommand{\Pp}{\mathcal{P}}
\newcommand{\Id}{\mathrm{Id}}
\tikzset{every picture/.style={line width=0.75pt}} 
\title{A subquadratic certification scheme for $P_5$-free graphs}
\author{Nicolas Bousquet}{CNRS, INSA Lyon, UCBL, LIRIS, UMR5205, F-69622 Villeurbanne, France \and CNRS - Université de Montréal CRM – CNRS, Canada}{}{}{}{}
\author{Sébastien Zeitoun}{CNRS, INSA Lyon, UCBL, LIRIS, UMR5205, F-69622 Villeurbanne, France}{}{}{}{}
\authorrunning{N. Bousquet and S. Zeitoun}
\begin{document}
	\maketitle
	
	\section{Introduction}
	
	
	In distributed computing, designing efficient algorithms to detect substructures in a network is a research field that has been very active recently.
	Several variants of the problem (detection, counting, listing, testing) have been studied in different models (particularly in \textsc{Congested Clique} and \textsc{Congest}). We refer to~\cite{CensorHillel22} for a recent survey on this topic.
	
	In this work, we study the subgraph detection problem in the model of \emph{local certification}. In local certification, every node of a network receives a \emph{certificate}, given by an external prover, whose aim is to convince all the nodes that some given property $\Pp$ is satisfied by the network.
	Each node outputs a binary decision (\emph{accept} or \emph{reject}), based only on its neighborhood (that is: the unique identifiers of its neighbors and their certificates\footnote{We will mention more general result where a node takes a decision based on its neighborhood at distance at most $d$ instead of $1$. The integer $d$ is called the \emph{verification radius} and when no verification radius is mentioned, we will assume that $d=1$.}).
	A certification scheme is said to be correct if it fulfills the \emph{completeness} and \emph{soundness} conditions, namely: if $\Pp$ is satisfied, the prover should be able to attribute certificates in such a way that all the vertices accept (completeness), and if $\Pp$ is not satisfied, for every assignment of the certificates by the prover, at least one node should reject (soundness). See~\cite{Feuilloley21} for an introduction to local certification.
	
	We usually measure the quality of a certification scheme by looking at the size of the certificates (which is usually expressed as a function of $n$, the number of vertices in the network): the smaller the certificates are, the better the scheme is. One of the fundamental results in certification is that any property can be certified with certificates of size $O(n^2)$~\cite{GoosS16} (the idea being roughly that the whole adjacency matrix of the graph can be given as a certificate to every vertex). This quadratic upper bound is known to be essentially tight for some properties such as graph isomorphism~\cite{GoosS16}. Nevertheless, some properties admit a much more compact certification, with certificates of poly-logarithmic size. An important line of research has been dedicated to the design of compact certification schemes, see e.g. the recent meta-theorems in that direction~\cite{FeuilloleyBP22,FraigniaudM0RT23,FraigniaudMRT24}.
	
	The existence of a substructure can be certified with compact certificates (i.e. of poly-logarithmic size). Indeed, the certification consists in encoding in the certificates which vertices belong to this substructure (see~\cite{BousquetCFPZ24+} for a more detailed discussion).
	Certifying the non-existence of a substructure turns out to be much harder. When this forbidden structure is a (collection of) minors, polylogarithmic size certificates have been proven to be sufficient in many cases such as for planar graphs~\cite{FeuilloleyFMRRT21}, bounded genus graphs~\cite{EsperetL22,FraigniaudM0RT23},
	small excluded minors~\cite{BousquetFP21,FeuilloleyBP22}, paths and cycle excluded minors~\cite{FeuilloleyBP22}, planar excluded minors~\cite{FraigniaudMRT24}. 
	
	In this paper, we focus on the absence of an induced subgraph~$H$ (we will also call this property \emph{$H$-freeness}). In that case, the property becomes harder to certify.
	Let us start with a simple case. Imagine that we want to certify that a graph does not contain some clique $K_k$ on $k$ vertices.
	An easy way to do it is to write in the certificate of every vertex its list of neighbors (each vertex can easily check if this information is correct).
	By doing so, each vertex is able to know the adjacency of its neighbors by looking at their certificates, and can thus determine the size of the largest clique to which it belongs (and it rejects if it is at least~$k$).
	However, this certification scheme has certificates of size $O(n \log n)$, so certificates of polynomial size. In fact, it has been proven in~\cite{BousquetEFZ24} that we can reduce this size to $O(n)$ by using a renaming argument. On the other hand, a construction of~\cite{DruckerKO13} that can be easily adapted to local certification leads to a $\Omega(n)$ lower bound if $k \geqslant 4$, and $\Omega(n/e^{O(\sqrt{\log n})})$ if $k=3$. So the following holds:
	
	\begin{theorem}[\cite{BousquetEFZ24}, \cite{DruckerKO13}]
		\label{thm:cliques}
		The optimal size of the certificates for $K_k$-freness is $\Theta(n)$ if $k\geqslant 4$, and lies between $\Omega(n/e^{O(\sqrt{\log n})})$ and $O(n)$ if $k=3$.
	\end{theorem}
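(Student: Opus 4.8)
The statement packages one upper bound and two lower bounds, which I would treat separately.

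\smallskip
\noindent\textbf{Upper bound $O(n)$, all $k$.} The plan is a \emph{renaming} argument compressing the naive $O(n\log n)$-bit scheme — put each vertex's list of neighbour identifiers in its certificate — down to $O(n)$ bits. First I would certify the value $n$ in the standard way (a rooted spanning tree with parent pointers, distances to the root and subtree sizes, plus $n$ stored at every vertex and checked against its neighbours; $O(\log n)$ bits), so every vertex learns $n$. Then the prover fixes a bijection $\ell\colon V\to\{1,\dots,n\}$ and writes in the certificate of $v$ both $\ell(v)$ and the characteristic vector $\chi_v\in\{0,1\}^{n}$ of $\{\ell(u):u\in N(v)\}$ — this is $O(n)$ bits. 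Each vertex checks that $\ell(v)\in\{1,\dots,n\}$, that $|\chi_v|=n$, and that the $\ell(u)$-th bit of $\chi_v$ equals $1$ for every neighbour $u$; it then reconstructs the adjacency among $\{v\}\cup N(v)$ (declaring $u\sim u'$, for $u,u'\in N(v)$, iff the $\ell(u')$-th bit of $\chi_u$ is $1$) and rejects iff this reconstruction exhibits a $K_k$ containing $v$. Completeness is immediate with the honest assignment. For soundness, if $S$ is a $K_k$ in $G$ and $v\in S$, the $k-1$ vertices of $S\setminus\{v\}$ are pairwise adjacent \emph{in $G$}, so the consistency checks force $v$ to reconstruct them as pairwise adjacent and $v$ rejects; the point to notice is that this uses only $\ell(V)\subseteq\{1,\dots,n\}$, not injectivity of $\ell$, so the vectors $\chi_v$ genuinely have length $n$ and the scheme has size $O(n)$.

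\smallskip
\noindent\textbf{Lower bound $\Omega(n)$, $k\geq4$.} The plan is the standard reduction from local certification to nondeterministic communication complexity (see~\cite{GoosS16}) fed with a set-disjointness gadget — essentially the construction of~\cite{DruckerKO13}. A radius-$1$ scheme of size $s$ turns any vertex partition $V=V_A\cup V_B$ into a nondeterministic two-party protocol for ``$\Pp(G)$?'': the certificate assignment is the nondeterministic advice, and Alice (knowing $G[V_A]$ and the cut) and Bob (knowing $G[V_B]$ and the cut) need only $O(b\cdot(s+\log n))$ further bits to exchange the relevant certificates and confirm that every vertex on their side accepts, where $b$ is the number of vertices incident to the cut; soundness of the scheme guarantees no advice fools them on a no-instance, so the total cost is $O(n(s+\log n))$ when $b=\Theta(n)$. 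I would then take a graph on $4r$ vertices with $V_A=\{p_i,q_i\}_{i\le r}$, $V_B=\{p'_i,q'_i\}_{i\le r}$ and a \emph{fixed} set of cross edges — the matchings $p_ip'_i$ and $q_iq'_i$ and the bicliques $\{p_i\}\times\{q'_j\}$ and $\{q_i\}\times\{p'_j\}$ — encoding Alice's set $S\subseteq[r]^2$ by the edges $p_iq_j$ and Bob's set $T\subseteq[r]^2$ by the edges $p'_iq'_j$, and (for $k>4$) adding $k-4$ universal vertices. A short case analysis shows that the only $K_4$'s are the sets $\{p_i,q_j,p'_i,q'_j\}$ with $(i,j)\in S\cap T$ and that the gadget has no larger clique, so $G$ is $K_k$-free iff $S\cap T=\emptyset$. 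Since every vertex touches the cut, $b=\Theta(r)=\Theta(n)$, so the protocol above decides disjointness on a universe of size $r^2$ at cost $O(n(s+\log n))$; as that problem has nondeterministic complexity $\Omega(r^2)=\Omega(n^2)$, this forces $s=\Omega(n)$, matching the upper bound.

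\smallskip
\noindent\textbf{The case $k=3$: the main obstacle.} This is where the real difficulty lies, and the reason the bound there drops to $\Omega(n/e^{O(\sqrt{\log n})})$. The gadget above cannot be made to work for triangles: a triangle has only three vertices, so for any partition at least one of its edges is non-crossing, and hence the presence of a triangle never depends on \emph{both} players' private inputs — across a cut, triangle-freeness is just a trivial conjunction of two one-sided conditions and has constant nondeterministic communication complexity, so the cut-based method yields nothing. To obtain $\Omega(n/e^{O(\sqrt{\log n})})$ one has to use the more intricate (multi-party, disjointness-based) construction of~\cite{DruckerKO13}, which packs many hard instances into a graph on few vertices by means of an extremal graph of Ruzsa--Szemer\'edi type — a graph on $N$ vertices carrying $N^{2}/e^{O(\sqrt{\log N})}$ edges whose edge set decomposes into induced matchings — and the factor $e^{O(\sqrt{\log N})}$ in its density is exactly what the lower bound inherits. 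Carefully adapting that argument to the certification setting is the delicate step; by comparison the $k\geq4$ case and the upper bound are routine. Closing the remaining gap for $k=3$ is left open.
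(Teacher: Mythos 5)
This theorem is not proved in the paper at all --- it is imported from \cite{BousquetEFZ24} and \cite{DruckerKO13}, with exactly the two ingredients you describe: a renaming argument for the $O(n)$ upper bound and an adaptation of the Drucker--Kuhn--Oshman communication-complexity construction for the lower bounds. Your sketch is correct and follows that same route (the $K_4$ gadget and the cut-based reduction check out, and, like the paper, you defer the Ruzsa--Szemer\'edi-based $k=3$ bound to the cited construction), so you are simply supplying more detail than the paper itself does.
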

	
	Theorem~\ref{thm:cliques} ensures that when $H$ is a clique, polynomial size certificates are required for the certification of $H$-freeness, even when $H$ is a triangle. When $H$ is sparse, it turns out that certifying $H$-free graphs is not easier. The certification of $H$-freeness, where $H$ is a large enough path or cycle, needs certificates of size at least linear in $n$~\cite{BousquetCFPZ24+}. They also proposed superlinear but subquadratic upper bounds when the verification radius is at least $2$ (in their model, to perform the verification, vertices do not only see their neighbors but also all the vertices and edges up to some constant distance). These results, together with Theorem~\ref{thm:cliques}, suggest that the certification of $H$-freeness can not be achieved in general with compact certificates.
	
	\paragraph*{Certification of $P_k$-free graphs.}
	
	In this work, we focus on the certification of the $P_k$-freeness property, where $P_k$ denotes the path having $k$ vertices.
	The case $k=3$ is easy to handle: indeed, connected $P_3$-free graphs are cliques, and it is of common knowledge that $O(\log n)$ bits are sufficient to certify that a graph is a clique (for instance, one just needs to write the number of vertices $n$ in the certificates of all the vertices, certify it using a spanning-tree, and then each vertex can check that its degree is $n-1$; see~\cite{Feuilloley21} for more details on the use of spanning-trees). For $k=4$, certificates of size $O(\log n)$ are still sufficient~\cite{FraigniaudM0RT23}. Indeed, the class of $P_4$-free graphs, also known as \emph{cographs}, has a very specific structure that can be exploited to design an efficient certification scheme.
	For $k \geqslant 5$, to the best of our knowledge, no certification scheme (with verification radius $1$) using $o(n^2)$ bits was known prior to this paper.
	
	In~\cite{BousquetCFPZ24+}, authors studied the certification of $P_k$-free graphs in a more general setting where the vertices have verification radius $d \geqslant 1$ (recall that it means that they can see the graph up to some constant distance~$d$ and that our model corresponds to the case $d=1$). They provided some non-trivial upper and lower bounds on the size of certificates.
	In terms of lower-bounds, they proved the following:
	
	\begin{theorem}[\cite{BousquetCFPZ24+}]
		\label{thm:lower bound}
		In the local certification model where the nodes have a verification radius $d \geqslant 1$, $\Omega_d(n)$ bits are necessary to certify $P_{4d+3}$-free graphs.
	\end{theorem}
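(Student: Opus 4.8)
The plan is to reduce from the nondeterministic communication complexity of Set-Disjointness, using the standard ``edge-crossing'' / fooling-set technique for certification lower bounds. Concretely, I would construct a family of graphs $(G_{A,B})$ indexed by pairs of subsets $A,B\subseteq[m]$ with $m=\Theta_d(n)$, each on at most $n$ vertices, together with a vertex cut $S$ of size $O_d(1)$ -- so $V(G_{A,B})=V_A\cup V_B$ with $V_A\cap V_B=S$ and no edge between $V_A\setminus S$ and $V_B\setminus S$ -- such that: the edges inside $V_A$ depend only on $A$ and those inside $V_B$ only on $B$; the ball of radius $d$ around $S$ is the same subgraph in all $G_{A,B}$ (hence has $O_d(1)$ vertices and is independent of $A$ and $B$); and $G_{A,B}$ contains an induced $P_{4d+3}$ if and only if $A\cap B\neq\emptyset$. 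Everything then reduces to building this family.

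Given such a family, the lower bound is obtained by the usual argument. Assume $P_{4d+3}$-freeness has a scheme with verification radius $d$ and certificates of $c$ bits. For each disjoint pair $(A,B)$ fix an accepting certificate assignment of $G_{A,B}$ and let $\ell_{A,B}$ be the word listing the certificates and identifiers of the $O_d(1)$ vertices in the radius-$d$ ball of $S$; it has $O_d(c+\log n)$ bits. If $(A,B)$ and $(A',B')$ are both disjoint and $\ell_{A,B}=\ell_{A',B'}$, then the graph $G_{A,B'}$ obtained by gluing the $A$-side of $G_{A,B}$ to the $B'$-side of $G_{A',B'}$, keeping the certificates of $G_{A,B}$ on $V_A$ and those of $G_{A',B'}$ on $V_{B'}$ (which are consistent on the input-independent radius-$d$ ball of $S$), is accepted by every node, since each node of $V_A$ sees exactly the view it had in $G_{A,B}$ and each node of $V_{B'}$ the view it had in $G_{A',B'}$. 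By soundness $G_{A,B'}$ is $P_{4d+3}$-free, i.e.\ $A\cap B'=\emptyset$, and symmetrically $A'\cap B=\emptyset$. Hence $\ell$ is injective on the fooling set $\{(A,[m]\setminus A):A\subseteq[m]\}$ (for $A\neq A'$, one of $A\cap([m]\setminus A')$ and $A'\cap([m]\setminus A)$ is nonempty), which forces $2^{O_d(c+\log n)}\geqslant 2^m$, i.e.\ $c=\Omega_d(m)=\Omega_d(n)$ since $m=\Theta(n)$.

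The technical heart is the construction of $(G_{A,B})$, and this is where I expect the real difficulty. Each element $i\in[m]$ should be represented by a ``selector'' gadget on the $A$-side recording whether $i\in A$, and likewise on the $B$-side; all gadgets hang off a fixed constant-size ``hub'' around $S$, but the portion of each gadget that actually encodes membership must be placed at distance $>d$ from $S$, so that the radius-$d$ ball of $S$ stays bounded and input-independent (this is also why only about $2d$ vertices of a selected gadget can protrude from that ball). I would design the hub and the attachments so that (i) no induced $P_{4d+3}$ lies entirely in $V_A\cup S$ or in $V_B\cup S$; (ii) every induced path crossing $S$ must traverse the hub coherently, pairing one $A$-gadget with the matching $B$-gadget, mismatched or shortcut traversals being destroyed by chords placed inside the hub; and (iii) such a coherent crossing path reaches length exactly $4d+3$ -- roughly $2d$ vertices on each side together with the $O_d(1)$ vertices of the hub -- precisely when the membership-encoding parts beyond distance $d$ are present on both sides, i.e.\ when $i\in A\cap B$. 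The main obstacle is exactly this gadget design: one must make the hub expressive enough that ``contains an induced $P_{4d+3}$'' computes Disjointness rather than an easy-to-certify threshold-type predicate, while keeping it of constant size with all choice-encoding structure pushed past distance $d$, and one must carefully rule out every spurious long induced path in the disjoint case -- paths mixing several gadgets, re-entering the hub, or running along the always-present prefixes of the selector gadgets. The value $4d+3$ is then essentially forced: a long induced path that evades local detection must genuinely cross the hub and reach the input-dependent region on both sides, hence go more than $d$ steps past $S$ on each side, and together with the $O_d(1)$ vertices it spends inside the hub this pins down the critical length; the constant hidden in $\Omega_d(n)$ absorbs both the $\log n$ term above and the $\mathrm{poly}(d)$ loss from $|S|$.
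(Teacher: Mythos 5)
First, a remark on scope: the paper does not prove this statement; Theorem~\ref{thm:lower bound} is imported from~\cite{BousquetCFPZ24+}, so there is no in-paper proof to compare against. Your general framework (cut-and-paste of accepting certificate assignments across an input-independent radius-$d$ interface, plus the fooling set $\{(A,[m]\setminus A)\}$ for Disjointness) is the standard and correct skeleton for such bounds. The problem is that the entire content of the theorem is the graph family, which you do not construct, and the parameter regime you commit to for it is provably unattainable.

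Concretely: you require a separator $S$ with $|B_d(S)|=O_d(1)$, input-independent, no edges between $V_A\setminus S$ and $V_B\setminus S$, and ``$G_{A,B}$ contains an induced $P_{4d+3}$ iff $A\cap B\neq\emptyset$'' for $A,B\subseteq[m]$ with $m=\Theta_d(n)$. No such family exists. Any induced $P_{4d+3}$ either lies entirely on one side (a predicate of $A$ alone or of $B$ alone), or crosses $S$; a crossing induced path interacts across the cut only through its adjacency pattern to the $O_d(1)$ vertices of $S$, so its existence is a disjunction, over the $O_d(1)$ possible ``interface patterns,'' of terms of the form $P_{\mathrm{pat}}(A)\wedge Q_{\mathrm{pat}}(B)$. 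Hence the predicate ``$G_{A,B}$ contains an induced $P_{4d+3}$'' is a union of $O_d(1)$ combinatorial rectangles in $(A,B)$, while the intersection predicate on $[m]$ admits the $1$-fooling set $\{(\{i\},\{i\}):i\in[m]\}$ and therefore needs at least $m$ rectangles. So for $m=\omega(1)$ your desiderata are contradictory; this is also why your informal plan of ``pairing the $i$-th $A$-gadget with the $i$-th $B$-gadget via chords in the hub'' cannot be realized by a constant-size hub. The difficulty you flag as the ``main obstacle'' is thus not merely technical but fatal in this regime. The repair is to let the interface grow: take Disjointness on $\Theta(n^2)$ elements (subsets of $[n]\times[n]$, encoded on each side by the adjacency between the far endpoints of $\Theta(n)$ arms of $2d+1$ vertices each, which is where the length $4d+3$ comes from), with $|B_d(S)|=\Theta_d(n)$; the interface word then has $O_d(n(c+\log n))$ bits and must separate $2^{\Omega(n^2)}$ fooling-set inputs, giving $c=\Omega_d(n)$. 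As written, your proposal neither provides a construction nor could one exist with the stated parameters, so the argument does not establish the theorem.
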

	
	Theorem~\ref{thm:lower bound} ensures that, if paths are long enough compared to the verification radius, certifying $P_k$-free graphs requires certificates of polynomial size. When $d=1$, Theorem~\ref{thm:lower bound} ensures that certificates of size $\Omega(n)$ are necessary to certify $P_7$-free graphs. However, Theorem~\ref{thm:lower bound} does not apply for $k \in \{5,6\}$. For $k \in \{5,6\}$, we are aware of an unpublished slightly worse but still polynomial lower bound~\cite{ChaniotisCHS24}: namely,~$\Omega(n/e^{O(\sqrt{\log n})})$. The idea of the proof is to tweak the lower-bound of the triangle-free construction in~\cite{DruckerKO13} and to adapt it for paths (roughly, by replacing some edges by non-edges and adding some vertices).

	In terms of upper bounds, authors in~\cite{BousquetCFPZ24+} provide a generic technique to obtain subquadratic upper bounds. However, this technique only works when the verification radius of the vertices is at least~$2$. These upper bounds are the following:
	
	\begin{theorem}[\cite{BousquetCFPZ24+}]
		\label{thm:P14/3d}
		In the local certification model where the nodes have a verification radius $d \geqslant 2$, the following upper bounds hold:
		\begin{itemize}
			\item $O(n\log^3 n)$ bits are sufficient to certify $P_k$-free graphs for all $k \leqslant 3d-1$;
			\item $O(n^{3/2}\log^2 n)$ bits are sufficient to certify $P_k$-free graphs for all $k \leqslant \left\lceil \frac{14}{3}d\right\rceil-1$.
		\end{itemize}
	\end{theorem}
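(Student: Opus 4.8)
The plan is to certify $P_k$-freeness via a hierarchical decomposition of $G$ into small-diameter pieces that a radius-$d$ view can inspect at once. First I would reduce to the connected case: since $P_k$-freeness is preserved by disjoint unions, it suffices to certify it inside each connected component, and the partition into components costs only $O(\log n)$ bits by the usual spanning-tree argument (one tree per component). So assume $G$ connected; as a shortest path is always induced, $G$ --- and every induced subgraph of it --- has diameter at most $k-2\le 3d-3$.

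Then I would build, top-down, a tree of pieces. Given a piece $Q$ (a connected induced subgraph, initially $G$), the prover picks a root $r_Q\in Q$, runs a BFS of $G[Q]$ from $r_Q$ --- at most $3d-2$ layers, by the diameter bound --- and cuts the layers into consecutive blocks of a width $w=\Theta(d)$ tuned to $d$; each block becomes a child piece and the construction recurses on it. Every piece stays $P_k$-free, and since consecutive vertices of an induced path lie in equal or adjacent BFS layers, any induced $P_k$ inside $Q$ uses at most $k\le 3d-1$ consecutive layers of $G[Q]$ and is therefore confined to a bounded number of consecutive sibling blocks. The certificate of a vertex $v$ records, for each level it belongs to, the piece identifier, the root $r_Q$, the BFS distance from $v$ to $r_Q$ in $G[Q]$, and $v$'s BFS parent --- $O(\log n)$ per level, hence $O(\log^2 n)$ overall, provided a fixed vertex is re-partitioned only $O(\log n)$ times. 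The remaining $\log n$ factor in the first bound --- and the $\sqrt n$ factor in the second --- pays for also storing, at vertices near a block boundary, the adjacency inside the bounded \emph{bridging region} joining two consecutive sibling blocks; this is exactly what a vertex in the middle block of a hypothetical induced $P_k$ needs in order to verify the absence of chords although the two ends of the path sit in neighbouring blocks. All checks use radius $d$: BFS distances and parents in the standard way, piece identifiers for consistency along the BFS trees, and each vertex finally checks that the finest piece containing it contains no induced $P_k$.

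For soundness, suppose $G$ has an induced $P_k=v_1\cdots v_k$. Take the coarsest level at which the $v_i$ are not all in one piece, and let $Q$ be the piece one level up that contains all of $P$; in the BFS of $G[Q]$ the path meets at most three consecutive sibling blocks --- this is where $w=\Theta(d)$ and $k\le 3d-1$ are used --- so the vertex of $P$ in the middle block sees, within radius $d$, all of $P$ together with the bridging adjacency recorded in its certificate, recognises the induced $P_k$, and rejects; completeness is immediate. Allowing wider blocks, which forces a heavier bridging region and hence the extra $\sqrt n$ overhead, is what raises the reachable length from $3d-1$ to $\lceil 14d/3\rceil-1$ in the second item.

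The real work is in two places. First, the $O(\log n)$ bound on the recursion depth: the diameter bound does not by itself prevent one block from carrying almost the whole piece again and again, so a genuinely structural use of $P_k$-freeness is needed --- morally, a long induced structure trapped in a single block, combined with the BFS tree inherited from the parent root, can be rerouted into a long induced path crossing block boundaries, a contradiction --- and turning this into a clean depth (hence certificate-size) bound is the main obstacle for the $O(n\log^3 n)$ scheme. Second, the $O(n^{3/2}\log^2 n)$ bound calls for a threshold argument in the spirit of the $\Theta(n)$ clique scheme of Theorem~\ref{thm:cliques}: pieces or vertices of large degree are few and their neighbourhoods can be stored explicitly in a renamed form, while low-degree ones are handled recursively, and balancing the two so that each certificate stays $O(\sqrt n\log^2 n)$ while still reaching paths of length $\lceil 14d/3\rceil-1$ is the delicate quantitative point.
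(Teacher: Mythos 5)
This statement is quoted from~\cite{BousquetCFPZ24+}; the present paper does not prove it, so there is no in-paper proof to compare against. The introduction only sketches the cited argument: exploit the small diameter of connected $P_k$-free graphs to find high-degree vertices, cut the adjacency matrix into pieces spread over the neighbors of such vertices so that they can reconstruct the graph, and use the radius $d\geqslant 2$ to let adjacent vertices cross-check that they reconstructed the same graph. Judged on its own, your proposal is not a proof but a plan, and the two places you yourself flag as ``the real work'' are precisely the content of the theorem, not finishing touches. Concretely: (a) nothing in your construction forces the recursion on blocks to terminate in $O(\log n)$ rounds --- a block can contain almost the entire parent piece, and the ``rerouting a trapped structure into a long induced path'' idea is not an argument, so the claimed $O(\log^2 n)$ per-vertex overhead (and hence the first bound) is unsupported; (b) the constants $3d-1$ and $\lceil 14d/3\rceil-1$ never actually emerge from your block width $w=\Theta(d)$ and ``bounded bridging region''; a radius-$d$ ball around a middle vertex of an induced $P_k$ only covers the path when $k\leqslant 2d+1$, so everything beyond that must come from certified adjacency data, and you do not quantify how much is needed or how it yields $n^{3/2}$ rather than $n^2$.

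There is also a soundness gap independent of the quantitative issues. In a no-instance the prover is adversarial, so the ``bridging adjacency'' a vertex reads in its certificate about vertices outside its radius-$d$ view cannot be taken at face value; the vertex that is supposed to recognize the induced $P_k$ must be able to verify that the claimed edges and, crucially, \emph{non}-edges between the far blocks are genuine. This is exactly the difficulty the present paper emphasizes (it is why the technique of~\cite{BousquetCFPZ24+} needs $d\geqslant 2$ for neighbors to cross-check their reconstructions, and why it breaks at $d=1$), and your sketch passes over it with an unspecified ``consistency check.'' Until you specify who stores which adjacency bits, how their truthfulness is locally certified, and how the block widths trade off against certificate size, the proposal does not establish either bound.
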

	
	The proof technique of Theorem~\ref{thm:P14/3d} heavily relies on the fact that $d \geqslant 2$. 
	Indeed, one of the cornerstones of the proof is to cut the adjacency matrix of the graph in small pieces, and spread these pieces in the certificates of the vertices such that the big-degree vertices are able to see all the pieces in the certificates of their neighbors and reconstruct the graph. But with such a certification, to guarantee to the vertices that the graph they reconstruct is correct, they need to check that they reconstructed the same graph as their neighbors, and the verification radius should be at least~$2$ to make this verification possible.
	In our model where the verification radius is~$1$, such an approach is very difficult to adapt.
	
	Authors in~\cite{BousquetCFPZ24+} conjectured that subquadratic upper bounds exist for any length of paths. More precisely, they conjectured the following holds:
	
	\begin{conjecture}[\cite{BousquetCFPZ24+}]
		\label{conj:p_alpha*d subquadratic}
		For all $\alpha > 0$, the optimal size for the local certification of $P_{\alpha d}$-free graphs with verification radius $d$ is of the form $\Theta(n^{2-1/f(\alpha)})$, for some unbounded increasing function~$f$.
	\end{conjecture}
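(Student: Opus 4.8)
The statement is a two-sided bound, so the plan is to treat the upper and lower bounds separately and then to argue that a single unbounded increasing $f$ governs both sides. The guiding intuition, which the existing data points already support, is that the difficulty depends essentially on the ratio $\alpha = k/d$: the near-linear regime $\alpha \approx 3$ of Theorem~\ref{thm:P14/3d} (first item) should correspond to $f(\alpha)\approx 1$, the regime $\alpha \approx 14/3$ (second item) to $f(\alpha)\approx 2$, and the lower bound of Theorem~\ref{thm:lower bound} at $\alpha\approx 4$ pins the exponent to $1$ from below. Extrapolating these points (they lie on the line $f(\alpha)=(3\alpha-4)/5$), a reasonable target is $f(\alpha)=\Theta(\alpha)$, with the optimal size interpolating between linear and quadratic as $\alpha$ grows.

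For the upper bound I would start from the structural fact that a connected $P_{\alpha d}$-free graph has diameter at most $\alpha d - 2$, since every shortest path is induced. Fixing a BFS layering from a root (certified by a spanning tree), the $O(\alpha d)$ layers group into $\Theta(\alpha)$ consecutive \emph{slabs} of width $d$; with verification radius $d$, each vertex sees its own slab and reaches into the neighboring ones, which is exactly what is needed to cross-check information held one slab away. The plan is then to iterate the matrix-spreading technique behind Theorem~\ref{thm:P14/3d}: at each step one isolates the high-degree vertices of the current region, spreads the corresponding block of the adjacency matrix across the certificates visible from a slab, and recurses on the residual low-degree part. The number of rounds this recursion can support should be governed by the number of slabs, namely $\Theta(\alpha)$, and each additional round should raise $f$ by a constant, yielding $f(\alpha)=\Theta(\alpha)$.

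For the lower bound I would use the standard reduction from nondeterministic communication complexity, i.e.\ the cut/crossing framework of~\cite{GoosS16}: a certification scheme of size $s$ on a graph cut into two parts by an edge set of size $c$ yields a nondeterministic protocol of cost $O(cs)$ for the Boolean predicate the two sides must jointly verify. The construction would encode the full $m\times m$ adjacency matrix of an arbitrary bipartite graph, but with every direct connection replaced by a path of length $\Theta(d)$ so that radius-$d$ verification cannot read off a whole column at once, and with the encoding funneled through only $c \approx n^{1/f(\alpha)}$ connector vertices across the cut. Forcing $\Omega(m^2)=\Omega(n^2)$ bits through this cut then gives $s=\Omega(n^{2}/c)=\Omega(n^{2-1/f(\alpha)})$. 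The tension to manage is that shrinking the bundle (smaller $c$) lengthens the shortest paths between the two sides, so one must certify that the whole gadget stays $P_{\alpha d}$-free; this is precisely where the value of $\alpha$ caps how small $c$ can be, and hence fixes $f$ from below.

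The main obstacle is to make the two sides meet at the \emph{same} function $f$, which is the entire content of the claimed $\Theta$. On the upper side, the recursion behind Theorem~\ref{thm:P14/3d} already needed a delicate decomposition of $P_k$-free graphs to reach only its second round; controlling an unbounded number of rounds --- in particular guaranteeing that the residual low-degree parts remain $P_{(\alpha - O(1))d}$-free so that the induction hypothesis applies --- is the principal difficulty. On the lower side, producing graphs that are simultaneously information-dense and of small diameter is a genuine tension, and matching the achievable $c$ to the recursion depth is the crux. Finally, the whole upper-bound machinery of Theorem~\ref{thm:P14/3d} requires $d\geq 2$, so the case $d=1$ --- the subject of the present paper, where even $P_5$-freeness is settled only here --- would need a separate argument and is likely the hardest special case of the conjecture.
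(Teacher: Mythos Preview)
The statement you are attempting to prove is labelled \emph{Conjecture} in the paper, and the paper does not prove it. Quite the contrary: the paper explicitly presents Theorem~\ref{thm:p5_free} as ``a first step towards Conjecture~\ref{conj:p_alpha*d subquadratic}'' in the special case $d=1$, and the Open Problems paragraph reiterates that closing the gap between the known upper and lower bounds, and extending the upper bounds to longer paths, are open. So there is no paper proof to compare your proposal against.

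As for the proposal itself, it is not a proof but a research programme, and you essentially say so: you identify the ``main obstacle'' as making the upper and lower bounds meet at the same~$f$, you note that the recursion behind Theorem~\ref{thm:P14/3d} ``already needed a delicate decomposition \ldots\ to reach only its second round'', and you flag the $d=1$ case as ``likely the hardest special case''. None of the key steps is carried out. In particular: (i) the claimed iterated matrix-spreading recursion is not defined, and there is no argument that the residual parts after one round are $P_{(\alpha-O(1))d}$-free or otherwise amenable to induction; (ii) the lower-bound gadget is only described at the level of intent, with no verification that it is $P_{\alpha d}$-free while still forcing $\Omega(n^2)$ bits across a cut of size $n^{1/f(\alpha)}$; and (iii) the extrapolation $f(\alpha)=(3\alpha-4)/5$ from three data points is heuristic and plays no logical role. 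Since the conjecture is open, a genuine proof would require substantial new ideas beyond what you outline.
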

	
	The aim of this work is to give the first subquadratic upper bound for $P_5$-free graphs, in the model where the verification radius is~1. Namely, we prove that the following holds:
	
	\begin{restatable}{theorem}{ThmUpperBound}
		\label{thm:p5_free}
		There is a certification scheme for $P_5$-free graphs with certificates of size $O(n^{3/2})$.
	\end{restatable}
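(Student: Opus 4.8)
The plan is to combine a degree threshold at $\sqrt{n}$ with the structure theory of $P_5$-free graphs, after two standard preprocessing steps. First, since $P_5$ is connected, $G$ is $P_5$-free if and only if each of its connected components is, so I would certify the partition of $G$ into connected components with a spanning forest ($O(\log n)$ extra bits) and work inside a fixed component. Second, on that component I would certify a bijection between the vertices and $\{1,\dots,n\}$; this is the classical renaming argument already used for the $\Theta(n)$ bound on $K_k$-freeness, its cost is absorbed in $O(n^{3/2})$, and it lets every vertex address every other one, so that a subset of $V$ — in particular a neighbourhood — can be written as a characteristic vector of exactly $n$ bits.

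With this in place, the basic part of the certificate is as follows. Every vertex $v$ writes its own adjacency vector $\chi_v\in\{0,1\}^n$; $v$ checks that $\chi_v$ is its true neighbourhood, and a neighbour $u$ cross-checks that $\chi_v[u]=\chi_u[v]$. Fix $t:=\lceil\sqrt n\rceil$ and call $v$ \emph{heavy} if $\deg(v)\ge t$ and \emph{light} otherwise. In addition, every light vertex $v$ writes the vectors $\chi_u$ of all its neighbours $u$; this costs at most $t\cdot n=O(n^{3/2})$ bits, and each copied vector is cross-checked by its owner. The point of this design is that, from what it sees, a vertex $c$ can reconstruct: the subgraph induced on $N[c]$; the set $R$ of vertices at distance exactly $2$ from $c$ together with $N(a)\cap N(c)$ for each $a\in R$; and, crucially, the full vector $\chi_a$ of every $a\in R$ that has a light neighbour in $N(c)$. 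Now $a$–$x$–$c$–$y$–$e$ is an induced $P_5$ exactly when $x,y\in N(c)$, $xy\notin E$, $a\in N(x)\setminus N[c]$ with $ay\notin E$, $e\in N(y)\setminus N[c]$ with $ex\notin E$, and $ae\notin E$; and every induced $P_5$ arises this way for $c$ its central vertex. So I would have $c$ reject whenever it can exhibit such a configuration. By the reconstruction above, $c$ can always complete this check when at least one of $x,y$ is light — the light vertex carries $\chi_a$ or $\chi_e$, which is exactly the missing bit ``$ae\in E$?'' — and if $G$ is $P_5$-free no vertex finds anything, so completeness holds.

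The remaining case, where both $x$ and $y$ are heavy, is the heart of the matter, and I expect essentially all the work to go there: $c$ sees both $a$ and $e$ but cannot decide $ae\in E$ from the certificates above. The plan is to use the structure of $P_5$-free graphs to make this case locally checkable with only $O(n^{3/2})$ further bits. Observe that if $x,y$ are non-adjacent with common neighbour $c$, then in a $P_5$-free graph the private outer neighbourhoods $N(x)\setminus(N[c]\cup N[y])$ and $N(y)\setminus(N[c]\cup N[x])$ are completely joined — so no such induced $P_5$ exists, and the difficulty is purely one of certification when $x,y$ are heavy (so neither carries enriched information). A natural tool is the Bacsó–Tuza theorem that every connected $P_5$-free graph has a dominating clique or a dominating $P_3$: the prover certifies such a dominating set $D$, every vertex receives the $O(\sqrt n)$ adjacency vectors needed to ``see'' $D$ and the heavy vertices attached to it, and one argues that a putative induced $P_5$ with both inner edges heavy on the outside is forced into a position where some vertex in or near $D$ witnesses it. Pinning down exactly which $O(n^{3/2})$ bits to distribute, and which vertex detects such a $P_5$, is the delicate step; everything else is bookkeeping on certificate sizes and routine local consistency checks.
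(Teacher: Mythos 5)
Your setup is sound as far as it goes: the renaming and spanning-tree preprocessing, the $O(n^{3/2})$ cost of having every light vertex (degree below $\sqrt n$) carry the adjacency vectors of all its neighbours, and the observation that the centre $c$ of a candidate path $a$--$x$--$c$--$y$--$e$ can determine every pair except $ae$, which it recovers whenever $x$ or $y$ is light. But the proof is not complete: the case where both $x$ and $y$ are heavy is exactly the content of the theorem, and you explicitly leave it open (``pinning down exactly which $O(n^{3/2})$ bits to distribute, and which vertex detects such a $P_5$, is the delicate step''). The one-sentence sketch you give for that case --- certify a single Bacs\'o--Tuza dominating clique or $P_3$ $D$ and give each vertex $O(\sqrt n)$ adjacency vectors ``near $D$'' --- does not work as stated: the offending $P_5$ may lie entirely inside one connected component of $G\setminus D$, where knowing the neighbourhoods of $D$ and of the heavy vertices attached to it tells no vertex anything about the missing chord $ae$. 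A single level of domination therefore cannot suffice; one is forced to recurse.

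That recursion is what the paper actually does, and it is where the real work lies. The paper certifies a full \emph{valid tree partition}: the root bag is a dominating clique or $P_3$, its children are obtained by recursing on the components of the complement of the root bag, and so on. The certificate of a vertex in a bag depends on the bag's type: for a $P_3$ or a clique of size at most $\sqrt n$, all members carry the neighbourhoods of the whole bag; for a clique of size at least $\sqrt n$, the neighbourhoods of \emph{all} vertices of the corresponding subtree are spread over the clique, $\sqrt n$ per member --- so largeness of the clique is exploited as a resource rather than being the obstruction, which is the opposite of your heavy/light dichotomy on individual vertices. Soundness is then a case analysis on the topmost bag $s_{i_0}$ met by the five path vertices: a large clique lets $v_{i_0}$ reconstruct everything; otherwise domination forces every $v_i$ to see the neighbourhood of $v_{i_0}$, which settles the cases $i_0\neq 3$; and when $i_0=3$ the tree structure either certifies the non-edge $v_1v_5$ (different branches) or puts $s_1$ below $s_5$ and lets $v_1$ or $v_5$ finish, again according to whether $\beta(s_5)$ is a small or large bag. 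None of this machinery is present in your proposal, so the argument has a genuine gap at its central step.
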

	
	Theorem~\ref{thm:p5_free} makes a first step towards Conjecture~\ref{conj:p_alpha*d subquadratic}, in the case where $d=1$. As we already explained, the proof technique of~\cite{BousquetCFPZ24+} cannot be adapted when $d=1$ and we have to adopt a new strategy. More concretely, our proof relies on the following structural theorem proven by Basc\'o and Tuza~\cite{bacso1990dominating}:
	
	\begin{theorem}[\cite{bacso1990dominating}]
		\label{thm:dominating p3 or clique}
		Every connected $P_5$-free graph has a dominating set which is either a clique, or an induced $P_3$.
	\end{theorem}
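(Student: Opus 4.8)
The plan is to prove this purely structural statement directly, following the strategy of extracting a single well-chosen clique and either showing it dominates or upgrading it to a dominating induced $P_3$. First I would record a preliminary observation about $P_5$-free connected graphs: every shortest path is induced, and an induced path on $5$ vertices is exactly a $P_5$, so all shortest paths have at most $4$ vertices and $\mathrm{diam}(G)\leqslant 3$. This gives a uniform handle on "how far" undominated vertices can be. The central object will be a clique $Q$ chosen to maximize the number of dominated vertices $|N[Q]|$; the whole argument rests on exploiting this maximality to forbid bad configurations.

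If $N[Q]=V(G)$, then $Q$ itself is the required dominating clique and we are done. Otherwise set $R=V(G)\setminus N[Q]\neq\emptyset$ and, using connectivity, pick $a\in R$ at minimum distance from $Q$. A short argument shows this distance is exactly $2$: a neighbor of $a$ on a shortest path to $Q$ cannot lie in $R$ by minimality, so it lies in $N(Q)$. Thus $a$ has a neighbor $w\in N(Q)\setminus Q$, and $w$ has a neighbor $q\in Q$. Now I invoke maximality: if $w$ were adjacent to every vertex of $Q$, then $Q\cup\{w\}$ would be a clique with $a\in N[Q\cup\{w\}]\setminus N[Q]$, dominating strictly more than $Q$, a contradiction. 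Hence there is $q_0\in Q$ with $w\not\sim q_0$. Since $a$ is at distance $2$ from $Q$ we have $a\not\sim q$, so $q-w-a$ is an induced $P_3$, which I will propose as the dominating set.

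The \emph{heart} of the proof is the claim that $\{q,w,a\}$ dominates $G$. Suppose not, and let $z$ be dominated by none of $q,w,a$; the goal is to produce a forbidden induced $P_5$. The clean case is when $z$ has a neighbor $q'\in Q$ with $q'\not\sim w$: then $z-q'-q-w-a$ is induced, because $q'\sim q$ by the clique, $q'\not\sim w$ by choice, $a\not\sim q,q'$ since $a$ is at distance $2$ from $Q$, and the pairs involving $z$ are non-edges since $z$ escapes domination by $q,w,a$. This contradicts $P_5$-freeness. The \emph{main obstacle} is the stubborn case where every neighbor of $z$ in $Q$ is also adjacent to $w$ (so the naive path $z-q'-q-w-a$ acquires the chord $q'w$), together with vertices of $R$ lying far from both $w$ and $a$. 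Here I would again lean on maximality of $|N[Q]|$: the clique $\{w\}\cup(N(w)\cap Q)$ dominates $a$ as well as every such stubborn $z$, so if it fails to dominate everything $Q$ did, it must miss some vertex $z'$ attached to $Q$ only through a vertex $q_0$ with $q_0\not\sim w$; that $z'$ is precisely the endpoint completing an induced $P_5$ $z'-q_0-q-w-a$. I expect the bulk of the technical work — and the part most likely to require care — to be the bookkeeping of these residual sub-configurations (for instance when $z'\sim a$, or when the undominated vertex sits at distance $3$ from $Q$), since each must be shown either to contradict $P_5$-freeness or to contradict the maximality of the dominated set.

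Putting the cases together, the existence of any vertex escaping $\{q,w,a\}$ always yields a forbidden $P_5$, so $q-w-a$ is a dominating induced $P_3$. Combined with the clique case $N[Q]=V(G)$, this establishes that every connected $P_5$-free graph admits a dominating set that is a clique or an induced $P_3$. The conceptual engine throughout is the single induced-$P_5$ gadget ``two clique vertices, the bridge $w$, then the far vertex $a$'', and the conversion of every domination failure, via the maximality of $|N[Q]|$, into either a larger dominating clique or such a gadget.
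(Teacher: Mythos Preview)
The paper does not contain a proof of this theorem: it is quoted from \cite{bacso1990dominating} and used as a black box to build valid tree partitions. There is therefore nothing in the paper against which to compare your attempt.

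As for your sketch itself, the overall strategy --- take a clique $Q$ maximizing $|N[Q]|$, and if it fails to dominate, upgrade to a $P_3$ $q\!-\!w\!-\!a$ with $a$ at distance $2$ from $Q$ --- is the right shape, and the ``clean case'' producing $z\!-\!q'\!-\!q\!-\!w\!-\!a$ is correct. But the ``stubborn case'' is not yet an argument. Your swap to the clique $Q'=\{w\}\cup(N(w)\cap Q)$ and the ensuing appeal to maximality does not close: the vertex $z'$ you extract (dominated by $Q$ but not by $Q'$) may well be adjacent to $q$, to $a$, or to both, and then $z'\!-\!q_0\!-\!q\!-\!w\!-\!a$ is not induced. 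You flag these as ``residual sub-configurations'', but they are exactly where the content lies; in particular the case of an undominated $z\in R$ (no neighbour in $Q$ at all) is not handled, and the diameter-$3$ observation alone does not dispose of it. So what you have is a plausible plan with the hard case left open, not a proof.
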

	
	Theorem~\ref{thm:dominating p3 or clique} will allow us to find some useful partition of the vertex set, which we will call a \emph{valid tree partition}. Basically, a valid tree partition of a graph~$G$ is a partition of its vertices, where the sets of the partition are the vertices of some tree. The root is a dominating set $X$ in~$G$, that is isomorphic to a clique or an induced~$P_3$. The children of the root are the tree-partitions of the connected components of $G \setminus X$. See Section~\ref{sec:tree_partitions} for a more formal definition.
	
	While the technique of spreading the adjacency matrix of the graph is possible if the dominating set $X$ is a large clique, it is not in the other cases. In these cases, we set the certificates of the vertices of $X$ to be the neighborhood of $X$. A careful analysis of this scheme permits to prove that we can indeed detect a $P_5$ if $G$ contains one. We refer the reader to Section~\ref{sec:proof} for the proof of Theorem~\ref{thm:p5_free}.
	
	\paragraph*{Open problems}
	A number of open questions about the certification of $P_k$-free graphs remain open.
	First, the lower and upper bounds for the certification of $P_k$-free graphs given by Theorems~\ref{thm:lower bound} and~\ref{thm:P14/3d} do not match. A natural, but probably hard, open problem is to close this gap. Also note that the upper bounds of Theorem~\ref{thm:P14/3d} hold only for $P_k$-free graphs if $k \leqslant \left\lceil \frac{14}{3}d\right\rceil-1$, where $d$ is the verification radius. Another interesting open question is to determine if it is possible to push further this upper bound for longer paths and prove Conjecture~\ref{conj:p_alpha*d subquadratic}.
	
	When the verification radius is~$1$, Theorem~\ref{thm:lower bound} gives a linear lower bound for the certification of $P_7$-free graphs and~\cite{ChaniotisCHS24} gives a $\Omega(n/e^{O(\sqrt{\log n})})$ lower bound for $P_5$-free graphs and $P_6$-free graphs. To our knowledge no superlinear lower bound is known for $P_k$-free graphs even for very large~$k$, and it would be interesting to prove one for some large enough~$k$. 
	
	In terms of upper bounds, we did not succeed to decrease further the upper bound of Theorem~\ref{thm:p5_free}, but we believe that our proof technique may be used as a basis to improve further the upper bound for $P_5$-free graphs.
	
	Note that Camby and Schaudt actually proved in~\cite{CambyS16} a generalization of Theorem~\ref{thm:dominating p3 or clique} that holds for $P_k$-free graphs in general (namely, they proved that every connected $P_k$-free graph has a connected dominating set which is either $P_{k-2}$-free, or isomorphic to $P_{k-2}$). Theorem~\ref{thm:dominating p3 or clique} corresponds to the case $k=5$, since every connected $P_3$-free graph is a clique. However, for larger values of $k$, this dominating clique can be replaced with an arbitrarily large set of low connectivity and we did not succeed to use it to design small certification schemes.
	Finding a subquadratic upper bound for $P_6$-free graphs with verification radius $1$ probably needs new ideas and is an interesting open problem.


	\section{Model and definitions}
	
	\subsection{Graphs}
	We model the network by a connected undirected graph, without loops and multiple edges. We denote by~$V(G)$ the set of vertices of a graph $G$, by~$E(G)$ its set of edges, and we will often denote by $n$ its number of vertices. For every positive integer~$k$, we denote by $P_k$ the graph which is a path on~$k$ vertices (and has $k-1$ edges), that is, the graph on vertex set $\{1, \ldots, k\}$ and edge set all the pairs $\{i,j\}$ such that $|i-j|=1$. If $G,H$ are two graphs, we say that \emph{$H$ is an induced subgraph of~$G$} if $H$ can be obtained from $G$ by deleting a subset of vertices $S \subseteq V(G)$ as well as all the edges incident to some vertex in $S$.
	
	\subsection{Local certification}
	
	In local certification, we consider $n$-vertices graphs where each vertex is equipped with a \emph{unique identifier}, which is an integer in the range~$\{1, \ldots, n^c\}$ for some constant $c>0$. More precisely: an \emph{identifier assignment} for a graph~$G$ is an injective function $\Id : V(G) \rightarrow \{1, \ldots, n^c\}$. Note that each unique identifier can be encoded on $O(\log n)$ bits.
	
	Let $\Pp$ be a property on graphs and $C$ be a set (called the set of \emph{certificates}). A \emph{certificate assignment} is a function~$c : V(G) \rightarrow C$. A \emph{local verification procedure} is an algorithm that outputs a binary decision (\emph{accept} or~\emph{reject}) and takes as input the closed neighborhood of a vertex~$u$, that is:
	\begin{itemize}
		\item the unique identifier of~$\Id(u)$ of~$u$,
		\item the certificate~$c(u)$ of~$u$,
		\item the set of identifiers and certificates of its neighbors $\{(\Id(v), c(v)) \; | \{u,v\} \in E(G)\}$
	\end{itemize}
	
	Let $s: \NN \rightarrow \NN$. We say that there exists a~\emph{local certification scheme for~$\Pp$ with certificates of size~$s$} if there exists a local verification procedure such that, for every $n$-vertex graph~$G$ equipped with an identifier assignment~$\Id$:
	\begin{itemize}
		\item if $G$ satisfies~$\Pp$, then there exists a certificate assignment $c: V(G) \rightarrow \{1, \ldots, 2^{s(n)}\}$ such that all the vertices accept (completeness)
		\item if $G$ does not satisfy~$\Pp$, then for all certificate assignment $c: V(G) \rightarrow \{1, \ldots, 2^{s(n)}\}$, there is at least one vertex which rejects (soundness)
	\end{itemize}
	
	We will often say that the certificates are assigned by an external~\emph{prover}, who tries to convince the vertices that~$\Pp$ is satisfied, but who should be able to succeed if and only if~$\Pp$ is indeed satisfied.
	
	Let us make a last remark about identifiers. In the following, we will give a polynomial upper bound for the certification of $P_5$-free graphs. In~\cite{BousquetEFZ24}, authors proved a renaming theorem: more precisely, they showed that, at a cost of $O(\log n)$ bits in the certificates, we can assume that the identifiers of the vertices are in~$\{1, \ldots, n\}$. Thus, we will make this assumption in the rest of this note, and this will not change the upper bound of Theorem~\ref{thm:p5_free} since $O(\log n) = o(n^{3/2})$.
	Similarly, using a technique based on spanning-trees and at a cost of $O(\log n)$ bits in the certificates, we can assume that all the vertices know~$n$ (see \cite{Feuilloley21} for details). Thus, we make this assumption too.

	\section{Valid tree partitions}
	\label{sec:tree_partitions}
	
	A \emph{rooted tree} is a pair $(T,r)$ where $T$ is a tree and $r \in V(T)$. Let $u,v,w \in V(T)$. Since $T$ is a tree, there is a unique path from~$r$ to~$v$. We say that \emph{$u$ is an ancestor of $v$} (or equivalently that \emph{$v$ is a descendant of~$u$}) if $u$ belongs to the path from $r$ to $v$. We say that \emph{$u$ is the parent of~$v$} (or equivalently that \emph{$v$ is a child of~$u$}) if $u$ is an ancestor of~$v$ and there is an edge between~$u$ and~$v$. Finally, we say that $u$ is the \emph{closest common ancestor of~$v$ and~$w$} if $u$ is an ancestor of both~$v$ and~$w$, and if every other common ancestor~$u'$ of both~$v$ and~$w$ is also an ancestor of~$u$.

	
	

	Let $G$ be a graph. We say that a pair $(T, \beta)$ is a \emph{tree partition} of $G$ if $T$ is a rooted tree and $\beta$ is a function \mbox{$V(T) \rightarrow 2^{V(G)}$} which forms a partition of $V(G)$. In other words, for each $s \in V(T)$, $\beta(s) \neq \emptyset$, and for each $v \in V(G)$, there exists a unique $s \in V(T)$ such that $v \in \beta(s)$. For each $s \in V(T)$, let us denote by $T_s$ the subtree of $T$ rooted in $s$ (that is the subtree of $T$ consisting of $s$ and all the descendants), and by $G_s$ the subgraph of $G$ induced by $\beta(T_s)$.
	
	\begin{definition}
		\label{def:tree_partitioning}
		We say that a tree partition $(T, \beta)$ is a \emph{valid tree-partition} of $G$ if, for every $s \in V(T)$, the following conditions are satisfied: 
		\begin{enumerate}[label=(\theenumi)]
			\item\label{defi} $\beta(s)$ is either a clique or an induced $P_3$ in $G$;
			\item\label{defii} $\beta(s)$ is a dominating set in $G_s$;
			\item\label{defiii} if $s$ is not a leaf of $T$ and has $s_1, \ldots, s_k$ as children, the connected components of $G_s \setminus \beta(s)$ are exactly $G_{s_1}, \ldots, G_{s_k}$.
		\end{enumerate}
	\end{definition}
	
	We can now remark that every $P_5$-free graph $G$ has at least one valid tree-partition, which can be obtained as follows. We start with an empty tree $T$. By Theorem~\ref{thm:dominating p3 or clique}, there exists a subset $S$ of vertices which is a dominating clique or a dominating $P_3$. We create a new node at the root $r$ of the tree and set $\beta(r)=S$. Then, we apply this procedure inductively on all the connected components of $G \setminus S$ (each of them creating a new subtree) where roots of these components are children of $r$. The three points of the definition easily follow from the construction and the definition of $S$.
	Note $G$ might have several valid tree-partitions.
	
	Let us prove the following basic property of tree partitions:
	
	\begin{lemma}
		\label{claim:edges_ancestors}
		Let $G=(V,E)$ be a graph having a valid tree-partition $(T, \beta)$, and let $\{u,v\} \in E$. Let $s, t \in V(T)$ be such that $u \in \beta(s)$ and $v \in \beta(t)$. Then, in $T$, either $s$ is an ancestor of $t$, or $t$ is an ancestor of $s$.
	\end{lemma}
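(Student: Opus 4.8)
The plan is to argue by contradiction using property~\ref{defiii} of Definition~\ref{def:tree_partitioning}. Suppose that $s$ and $t$ are incomparable in $T$, i.e.\ neither is an ancestor of the other. Since the root is an ancestor of every node, neither $s$ nor $t$ is the root, so their closest common ancestor $w$ is well defined, and $w \notin \{s, t\}$ (otherwise one of $s, t$ would be an ancestor of the other). Let $w_1$ be the child of $w$ lying on the unique path from $w$ to $s$, and $w_2$ the child of $w$ lying on the path from $w$ to $t$. The first step is to check that $w_1 \neq w_2$: if they were equal, $w_1$ would be a common ancestor of both $s$ and $t$ that is not an ancestor of $w$, contradicting the definition of closest common ancestor.

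Next I would locate $u$ and $v$ inside $G_w$. By definition $s$ is a descendant of $w_1$ (or $s = w_1$), so $\beta(s) \subseteq \beta(T_{w_1}) = V(G_{w_1})$, hence $u \in V(G_{w_1})$; symmetrically $v \in V(G_{w_2})$. Moreover, since $\beta$ is a partition of $V(G)$ and $u \in \beta(s)$ with $s \neq w$, we get $u \notin \beta(w)$, and likewise $v \notin \beta(w)$. Both $u$ and $v$ therefore lie in $V(G_w) \setminus \beta(w)$, and since $\{u,v\} \in E$ and $G_w$ is the subgraph of $G$ induced by $\beta(T_w) \supseteq \{u,v\}$, the edge $\{u,v\}$ is present in $G_w$, and in fact in $G_w \setminus \beta(w)$.

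Finally, I would invoke property~\ref{defiii}: the connected components of $G_w \setminus \beta(w)$ are exactly $G_{w_1}, \dots, G_{w_k}$, where $w_1, \dots, w_k$ are the children of $w$. But $u$ and $v$ lie in $V(G_{w_1})$ and $V(G_{w_2})$ with $w_1 \neq w_2$, so the edge $\{u,v\}$ of $G_w \setminus \beta(w)$ joins two distinct connected components, a contradiction. Hence $s$ and $t$ must be comparable, which is exactly the claim. There is no real obstacle here beyond bookkeeping with the definitions; the only point requiring a little care is justifying $w_1 \neq w_2$ from the minimality property of the closest common ancestor, and confirming that the edge $\{u,v\}$ genuinely survives into $G_w \setminus \beta(w)$ (which uses that $\beta$ is a partition so that $u, v \notin \beta(w)$).
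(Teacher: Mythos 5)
Your proof is correct and follows essentially the same route as the paper's: assume $s$ and $t$ are incomparable, take their closest common ancestor, and apply property~(3) of Definition~\ref{def:tree_partitioning} to conclude that $u$ and $v$ lie in distinct connected components of $G_w \setminus \beta(w)$ and hence cannot be adjacent. The extra bookkeeping you supply (distinctness of the two children, membership of $u,v$ in $G_w\setminus\beta(w)$) is exactly the detail the paper leaves implicit.
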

	\begin{proof}
		This statement follows from~\ref{defiii} of Definition~\ref{def:tree_partitioning}. If neither $s$ nor $t$ is an ancestor of the other, let $q$ be their closest common ancestor. By Definition~\ref{def:tree_partitioning}\ref{defiii}, $u$ and $v$ belong to two different connected components in $G_q \setminus \beta(q)$, and in particular are not adjacent.
	\end{proof}

	\section{Proof of Theorem~\ref{thm:p5_free}.}
	\label{sec:proof}
	
	The remaining of this paper is devoted to the proof of~Theorem~\ref{thm:p5_free}:
	\ThmUpperBound*
	
	Let us describe a certification scheme for $P_5$-free graphs using certificates of size $O(n^{3/2})$.
	Let $G$ be a $n$-vertex graph with unique identifiers in $\{1, \ldots, n\}$.
	We describe the certification, then explain how nodes check their certificate and finally prove the correctness of the construction.

	\paragraph*{Certification.}
	If $G$ is $P_5$-free, the certificates given by the prover to the vertices of $G$ consist in three parts. For every vertex $u \in V$, we will denote these parts by $\neighbors(u)$, $\tp(u)$, $\pieces(u)$. \medskip

	\noindent
	First, in $\neighbors(u)$, the prover writes the list of neighbors of $u$. Since identifiers of the vertices are in $\{1, \ldots, n\}$, this can be done with $O(n)$ bits, by writing a $n$-bit vector $T[u]$ such that $T[u]_i$ is equal to $1$ if and only if $u$ is a neighbor of the vertex having identifier~$i$.
	\medskip
	
	\noindent
	In $\tp(u)$, the prover writes a valid tree-partition $(T, \beta)$ of $G$, as follows. First, it gives the structure of the rooted tree~$T$. Since~$T$ has at most $n$ nodes, it is of common knowledge that it can be done with $O(n)$ bits, for instance in the following way: by drawing the tree with the root on the top and a bottom-up child relationship, following the edges in a depth-first search and writing a~$0$ for each downward edge and a~$1$ for each upward edge (see Figure~\ref{fig:example_tree_encoding} for an example).
	Then, for each $s \in V(T)$, the prover writes which vertices are in $\beta(s)$, and if $\beta(s)$ is a clique or an induced $P_3$ in $G$. Finally, if $\beta(s)$ is a $P_3$, the prover writes which vertex is the center of the $P_3$ in the certificate.
	In total, this part of the certificate has size $O(n \log n)$.
	
	\begin{figure}[h]
		\centering
		
		\begin{tikzpicture}[x=0.75pt,y=0.75pt,yscale=-1,xscale=1]
			
			\draw    (246.5,199) -- (265.5,149.5) ;
			\draw    (265.5,149.5) -- (284.5,199) ;
			\draw    (284.5,100) -- (303.5,149.5) ;
			\draw    (265.5,149.5) -- (284.5,100) ;
			\draw    (222.5,100) -- (222.5,149.5) ;
			\draw    (253.5,50.5) -- (222.5,100) ;
			\draw    (253.5,50.5) -- (253.5,100) ;
			\draw    (253.5,50.5) -- (284.5,100) ;
			\draw  [fill={rgb, 255:red, 255; green, 255; blue, 255 }  ,fill opacity=1 ] (247,50.5) .. controls (247,46.91) and (249.91,44) .. (253.5,44) .. controls (257.09,44) and (260,46.91) .. (260,50.5) .. controls (260,54.09) and (257.09,57) .. (253.5,57) .. controls (249.91,57) and (247,54.09) .. (247,50.5) -- cycle ;
			\draw  [fill={rgb, 255:red, 255; green, 255; blue, 255 }  ,fill opacity=1 ] (216,100) .. controls (216,96.41) and (218.91,93.5) .. (222.5,93.5) .. controls (226.09,93.5) and (229,96.41) .. (229,100) .. controls (229,103.59) and (226.09,106.5) .. (222.5,106.5) .. controls (218.91,106.5) and (216,103.59) .. (216,100) -- cycle ;
			\draw  [fill={rgb, 255:red, 255; green, 255; blue, 255 }  ,fill opacity=1 ] (247,100) .. controls (247,96.41) and (249.91,93.5) .. (253.5,93.5) .. controls (257.09,93.5) and (260,96.41) .. (260,100) .. controls (260,103.59) and (257.09,106.5) .. (253.5,106.5) .. controls (249.91,106.5) and (247,103.59) .. (247,100) -- cycle ;
			\draw  [fill={rgb, 255:red, 255; green, 255; blue, 255 }  ,fill opacity=1 ] (278,100) .. controls (278,96.41) and (280.91,93.5) .. (284.5,93.5) .. controls (288.09,93.5) and (291,96.41) .. (291,100) .. controls (291,103.59) and (288.09,106.5) .. (284.5,106.5) .. controls (280.91,106.5) and (278,103.59) .. (278,100) -- cycle ;
			\draw  [fill={rgb, 255:red, 255; green, 255; blue, 255 }  ,fill opacity=1 ] (216,149.5) .. controls (216,145.91) and (218.91,143) .. (222.5,143) .. controls (226.09,143) and (229,145.91) .. (229,149.5) .. controls (229,153.09) and (226.09,156) .. (222.5,156) .. controls (218.91,156) and (216,153.09) .. (216,149.5) -- cycle ;
			\draw  [fill={rgb, 255:red, 255; green, 255; blue, 255 }  ,fill opacity=1 ] (297,149.5) .. controls (297,145.91) and (299.91,143) .. (303.5,143) .. controls (307.09,143) and (310,145.91) .. (310,149.5) .. controls (310,153.09) and (307.09,156) .. (303.5,156) .. controls (299.91,156) and (297,153.09) .. (297,149.5) -- cycle ;
			\draw  [fill={rgb, 255:red, 255; green, 255; blue, 255 }  ,fill opacity=1 ] (259,149.5) .. controls (259,145.91) and (261.91,143) .. (265.5,143) .. controls (269.09,143) and (272,145.91) .. (272,149.5) .. controls (272,153.09) and (269.09,156) .. (265.5,156) .. controls (261.91,156) and (259,153.09) .. (259,149.5) -- cycle ;
			\draw  [fill={rgb, 255:red, 255; green, 255; blue, 255 }  ,fill opacity=1 ] (240,199) .. controls (240,195.41) and (242.91,192.5) .. (246.5,192.5) .. controls (250.09,192.5) and (253,195.41) .. (253,199) .. controls (253,202.59) and (250.09,205.5) .. (246.5,205.5) .. controls (242.91,205.5) and (240,202.59) .. (240,199) -- cycle ;
			\draw  [fill={rgb, 255:red, 255; green, 255; blue, 255 }  ,fill opacity=1 ] (278,199) .. controls (278,195.41) and (280.91,192.5) .. (284.5,192.5) .. controls (288.09,192.5) and (291,195.41) .. (291,199) .. controls (291,202.59) and (288.09,205.5) .. (284.5,205.5) .. controls (280.91,205.5) and (278,202.59) .. (278,199) -- cycle ;
			
		\end{tikzpicture}
		
		\caption{A rooted tree whose encoding is 0011010001011011.}
		\label{fig:example_tree_encoding}
	\end{figure}
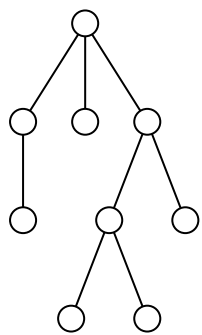

	\medskip
	
	\noindent
	In $\pieces(u)$, the prover writes the following. Let $s \in V(T)$ be the unique vertex of $T$ such that $u \in \beta(s)$. There are three cases:
	\begin{itemize}
		\item If $\beta(s)$ is an induced $P_3$ or a clique with at most $\sqrt{n}$ vertices, the prover writes in $\pieces(u)$ the neighborhood of all the vertices that belong to $\beta(s)$. As for $\neighbors(u)$, encoding the neighborhood of every vertex $v$ can be done with $O(n)$ bits, by coding its identifier on $O(\log n)$ bits and then writing a $n$-bit vector $T[v]$ such that $T[v]_i$ is equal to~$1$ if and only if $v$ is a neighbor of the vertex having identifier~$i$. Thus, in that case, $\pieces(u)$ has size $O(n^{3/2})$.
		\item If $\beta(s)$ is a clique on at least $\sqrt{n}$ vertices, the prover writes in $\pieces(u)$ the neighborhood of $u$ plus $\sqrt{n}$ other vertices of $G_s$. It does it in such a way, for every $v \in G_s$, there exists $w \in \beta(s)$ such that the neighborhood of $v$ is written in $\pieces(w)$. Note that it is possible, because the size of the clique is at least $\sqrt{n}$.
		Again, encoding the neighborhood of a single vertex can be done with $O(n)$ bits, so in total $\pieces(u)$ has size $O(n^{3/2})$ too.
	\end{itemize}
	The overall size of the certificates is thus $O(n^{3/2})$.

	\paragraph*{Verification.}
	Let us now explain how each vertex $u \in V$ checks the validity of its certificate.
	\begin{enumerate}[label=(\roman*)]
		\item First, $u$ checks its list of neighbors and rejects if $\neighbors(u)$ is not correct.
		
		\item Then, the node $u$ checks if $\tp(u)=\tp(v)$ for every neighbor $v$, and if every vertex appears in exactly one subset; $u$ rejects if it is not the case.
		
		If no vertex has rejected at this point, the prover has written the same tree-partition $(T, \beta)$ in all the certificates.
		
		\item The goal of the next step is to check that $(T, \beta)$ is a valid tree partition. Let $s \in V(T)$ be such that $u \in \beta(s)$.
		If $\beta(s)$ is supposed to be a clique, $u$ checks that it is indeed adjacent to all the vertices in $\beta(s)$. \\
		If $\beta(s)$ is supposed to be a $P_3$ and $u$ is the center (resp. an endpoint), it checks if it is adjacent to both endpoints (resp. to the central vertex, and not to the other endpoint). \\
		Then, for every $t \in V(T)$ which is an ancestor of $s$, $u$ checks if it is adjacent to at least one vertex in~$\beta(t)$. \\
		Finally, $u$ rejects if there exists $t \in V(T)$ which is an ancestor of $s$, having at least two children $t_1$ and $t_2$, such that $u$ has a neighbor $v$ with $u \in G_{t_1}$ and $v \in G_{t_2}$.

		\item Let $s \in V(T)$ be such that $u \in \beta(s)$. 
		The node $u$ performs the following checks and rejects if one of them is incorrect. \\
		If $\beta(s)$ is a $P_3$ or a clique of size at most $\sqrt{n}$, then, for every $v \in \beta(s)$, $u$ checks if $\pieces(u)=\pieces(v)$, and if its own neighborhood is correctly written in $\pieces(u)$. \\
		If $\beta(s)$ is a clique of size at least $\sqrt{n}$, $u$ checks that, for every $v \in G_s$, there exists $w \in \beta(s)$ such that the neighborhood of $v$ is written in $\pieces(w)$. Moreover, if $v$ is adjacent to $u$ then $u$ checks that the neighborhood of $v$ in $\pieces(w)$ and in $\neighbors(v)$ are the same. 
		
		\item The last step of the verification consists in trying to determine if there is a $P_5$. At this point, the vertex $u$ (is supposed to) know its neighbors, the neighborhoods of its neighbors~$v$, the neighborhoods of some other vertices in the graph (written in $\pieces(v)$ for every neighbor $v$), and the valid tree-partition $(T, \beta)$ of $G$. In particular, if two vertices $v$ and $w$ are in $\beta(s)$ for some $s \in V(T)$, $u$ knows if $v$ and $w$ are adjacent or not. Similarly, if two vertices $v$ and $w$ are in different \emph{branches} of $T$ (in other words, if there exist $s, t \in V(T)$ which are not ancestors one of each other such that $v \in \beta(s)$ and $w \in \beta(t)$), $u$ knows that $v$ and $w$ are not adjacent. If, with all this information, $u$ detects an induced $P_5$ (which means that with the rules above it knows for sure all the edges and non-edges), then $u$ rejects.
		
		\item If $u$ did not reject previously, it accepts.
	\end{enumerate}

	\paragraph*{Correctness.} If $G$ is a $P_5$-free graph, if the prover gives the certificates described above, every vertex accepts. Let us show that, conversely, if $G$ has an induced $P_5$, at least one vertex rejects.
	
	If no vertex rejects in step (i), then~$\neighbors(u)$ is correct for every vertex~$u$. Thus, every neighbor~$v$ adjacent to~$u$ knows the adjacency of~$u$.
	Moreover, if no vertex rejects in steps (ii) and (iii), then the tree-partition $(T,\beta)$ written in the certificates is valid, because vertices checked the three properties of Definition~\ref{def:tree_partitioning}. Note that, for property~(3), the vertices just checked that for every $s \in V(T)$ having as children $s_1, \ldots, s_k$ in $T$, the induced subgraphs $G_{s_1}, \ldots, G_{s_k}$ are pairwise independent. But in fact each of them is also connected since for all $i \in \{1, \ldots, k\}$, $\beta(s_i)$ is a connected dominating set in~$G_{s_i}$. Thus, $G_{s_1}, \ldots, G_{s_k}$ are indeed the connected components of $G_s \setminus \beta(s)$.
	Then, if no vertex rejects at step (iv), it ensures that, for every vertex $u$, the neighborhoods written in $\pieces(u)$ are correct. Moreover, if $u \in \beta(s)$ which is a $P_3$ or a clique of size at most $\sqrt{n}$, the neighborhood of all the vertices in $\beta(s)$ is written in $\pieces(u)$, and if $\beta(s)$ is a clique of size at least $\sqrt{n}$, for every vertex $v \in G_s$, $u$ has a neighbor $w$ such that $\pieces(w)$ contains the neighborhood of $v$.
	
	Assume that $G$ has an induced $P_5$ that we denote by $P$, and let us denote the five vertices of $P$ by $v_1, \ldots, v_5$ (such that $\{v_i,v_j\} \in E$ if and only if $|i-j|=1$). Let $s_1, \ldots, s_5$ be such that $v_i \in \beta(s_i)$ for all $i \in \{1, \ldots, 5\}$. Let us show that at least one vertex among $v_1, \ldots, v_5$ detects $P$ and thus rejects at step (v).
	First, by looking only at the part $\neighbors$ of the certificates, $v_3$ knows the neighborhoods of its neighbors, so in particular those of $v_2$ and $v_4$. Thus, $v_3$ knows all the edges and non-edges of $P$, except the non-edge between $v_1$ and $v_5$. Similarly, $v_2$ knows all the edges and non-edges of $P$ except the edge between $v_4$ and $v_5$, and $v_4$ knows all the edges and non-edges of $P$ except the edge between $v_1$ and $v_2$. This is depicted on Figure~\ref{fig:view of v2, v3, v4}.
	
	\begin{figure}
		\centering
		
		\begin{subfigure}[t]{0.35\textwidth}
			\scalebox{0.8}{
				\begin{tikzpicture}[x=0.75pt,y=0.75pt,yscale=-1,xscale=1]
					
					\draw [color={rgb, 255:red, 208; green, 2; blue, 27 }  ,draw opacity=1 ][line width=1.5]  [dash pattern={on 5.63pt off 4.5pt}]  (110.25,139.25) .. controls (184.5,62) and (307.5,59) .. (383.25,139.25) ;
					\draw  [dash pattern={on 0.84pt off 2.51pt}]  (110.25,139.25) .. controls (177.25,190) and (245.25,192) .. (315,139.25) ;
					\draw  [dash pattern={on 0.84pt off 2.51pt}]  (178.5,139.25) .. controls (245.5,190) and (313.5,192) .. (383.25,139.25) ;
					\draw  [dash pattern={on 0.84pt off 2.51pt}]  (110.25,139.25) .. controls (137.25,109) and (213.25,107) .. (246.75,139.25) ;
					\draw  [dash pattern={on 0.84pt off 2.51pt}]  (246.75,139.25) .. controls (273.75,109) and (349.75,107) .. (383.25,139.25) ;
					\draw  [dash pattern={on 0.84pt off 2.51pt}]  (178.5,139.25) .. controls (205.5,109) and (281.5,107) .. (315,139.25) ;
					\draw    (110.25,139.25) -- (178.5,139.25) ;
					\draw    (178.5,139.25) -- (246.75,139.25) ;
					\draw    (246.75,139.25) -- (315,139.25) ;
					\draw    (315,139.25) -- (383.25,139.25) ;
					\draw  [fill={rgb, 255:red, 255; green, 255; blue, 255 }  ,fill opacity=1 ] (104,139.25) .. controls (104,135.8) and (106.8,133) .. (110.25,133) .. controls (113.7,133) and (116.5,135.8) .. (116.5,139.25) .. controls (116.5,142.7) and (113.7,145.5) .. (110.25,145.5) .. controls (106.8,145.5) and (104,142.7) .. (104,139.25) -- cycle ;
					\draw  [fill={rgb, 255:red, 255; green, 255; blue, 255 }  ,fill opacity=1 ] (172.25,139.25) .. controls (172.25,135.8) and (175.05,133) .. (178.5,133) .. controls (181.95,133) and (184.75,135.8) .. (184.75,139.25) .. controls (184.75,142.7) and (181.95,145.5) .. (178.5,145.5) .. controls (175.05,145.5) and (172.25,142.7) .. (172.25,139.25) -- cycle ;
					\draw  [fill={rgb, 255:red, 245; green, 166; blue, 35 }  ,fill opacity=1 ] (240.5,139.25) .. controls (240.5,135.8) and (243.3,133) .. (246.75,133) .. controls (250.2,133) and (253,135.8) .. (253,139.25) .. controls (253,142.7) and (250.2,145.5) .. (246.75,145.5) .. controls (243.3,145.5) and (240.5,142.7) .. (240.5,139.25) -- cycle ;
					\draw  [fill={rgb, 255:red, 255; green, 255; blue, 255 }  ,fill opacity=1 ] (308.75,139.25) .. controls (308.75,135.8) and (311.55,133) .. (315,133) .. controls (318.45,133) and (321.25,135.8) .. (321.25,139.25) .. controls (321.25,142.7) and (318.45,145.5) .. (315,145.5) .. controls (311.55,145.5) and (308.75,142.7) .. (308.75,139.25) -- cycle ;
					\draw  [fill={rgb, 255:red, 255; green, 255; blue, 255 }  ,fill opacity=1 ] (377,139.25) .. controls (377,135.8) and (379.8,133) .. (383.25,133) .. controls (386.7,133) and (389.5,135.8) .. (389.5,139.25) .. controls (389.5,142.7) and (386.7,145.5) .. (383.25,145.5) .. controls (379.8,145.5) and (377,142.7) .. (377,139.25) -- cycle ;
					
					\draw (240,51) node [anchor=north west][inner sep=0.75pt]   [align=left] {\textcolor[rgb]{0.82,0.01,0.11}{{\large ?}}};
					\draw (112.25,148.9) node [anchor=north west][inner sep=0.75pt]    {$v_{1}$};
					\draw (180.5,148.9) node [anchor=north west][inner sep=0.75pt]    {$v_{2}$};
					\draw (248.75,148.9) node [anchor=north west][inner sep=0.75pt]    {$v_{3}$};
					\draw (317,148.9) node [anchor=north west][inner sep=0.75pt]    {$v_{4}$};
					\draw (385.25,148.9) node [anchor=north west][inner sep=0.75pt]    {$v_{5}$};

				\end{tikzpicture}
			}
			\captionsetup{justification=centering, margin=-1cm}
			\caption{The view of $v_3$.}
		\end{subfigure}
		
		\begin{subfigure}[t]{0.35\textwidth}
			\scalebox{0.8}{
				\begin{tikzpicture}[x=0.75pt,y=0.75pt,yscale=-1,xscale=1]
					
					\draw [color={rgb, 255:red, 0; green, 0; blue, 0 }  ,draw opacity=1 ][line width=0.75]  [dash pattern={on 0.84pt off 2.51pt}]  (110.25,139.25) .. controls (184.5,62) and (307.5,59) .. (383.25,139.25) ;
					\draw  [dash pattern={on 0.84pt off 2.51pt}]  (110.25,139.25) .. controls (177.25,190) and (245.25,192) .. (315,139.25) ;
					\draw  [dash pattern={on 0.84pt off 2.51pt}]  (178.5,139.25) .. controls (245.5,190) and (313.5,192) .. (383.25,139.25) ;
					\draw  [dash pattern={on 0.84pt off 2.51pt}]  (110.25,139.25) .. controls (137.25,109) and (213.25,107) .. (246.75,139.25) ;
					\draw  [dash pattern={on 0.84pt off 2.51pt}]  (246.75,139.25) .. controls (273.75,109) and (349.75,107) .. (383.25,139.25) ;
					\draw  [dash pattern={on 0.84pt off 2.51pt}]  (178.5,139.25) .. controls (205.5,109) and (281.5,107) .. (315,139.25) ;
					\draw    (110.25,139.25) -- (178.5,139.25) ;
					\draw    (178.5,139.25) -- (246.75,139.25) ;
					\draw    (246.75,139.25) -- (315,139.25) ;
					\draw [color={rgb, 255:red, 208; green, 2; blue, 27 }  ,draw opacity=1 ][line width=1.5]  [dash pattern={on 5.63pt off 4.5pt}]  (315,139.25) -- (383.25,139.25) ;
					\draw  [fill={rgb, 255:red, 255; green, 255; blue, 255 }  ,fill opacity=1 ] (104,139.25) .. controls (104,135.8) and (106.8,133) .. (110.25,133) .. controls (113.7,133) and (116.5,135.8) .. (116.5,139.25) .. controls (116.5,142.7) and (113.7,145.5) .. (110.25,145.5) .. controls (106.8,145.5) and (104,142.7) .. (104,139.25) -- cycle ;
					\draw  [fill={rgb, 255:red, 245; green, 166; blue, 35 }  ,fill opacity=1 ] (172.25,139.25) .. controls (172.25,135.8) and (175.05,133) .. (178.5,133) .. controls (181.95,133) and (184.75,135.8) .. (184.75,139.25) .. controls (184.75,142.7) and (181.95,145.5) .. (178.5,145.5) .. controls (175.05,145.5) and (172.25,142.7) .. (172.25,139.25) -- cycle ;
					\draw  [fill={rgb, 255:red, 255; green, 255; blue, 255 }  ,fill opacity=1 ] (240.5,139.25) .. controls (240.5,135.8) and (243.3,133) .. (246.75,133) .. controls (250.2,133) and (253,135.8) .. (253,139.25) .. controls (253,142.7) and (250.2,145.5) .. (246.75,145.5) .. controls (243.3,145.5) and (240.5,142.7) .. (240.5,139.25) -- cycle ;
					\draw  [fill={rgb, 255:red, 255; green, 255; blue, 255 }  ,fill opacity=1 ] (308.75,139.25) .. controls (308.75,135.8) and (311.55,133) .. (315,133) .. controls (318.45,133) and (321.25,135.8) .. (321.25,139.25) .. controls (321.25,142.7) and (318.45,145.5) .. (315,145.5) .. controls (311.55,145.5) and (308.75,142.7) .. (308.75,139.25) -- cycle ;
					\draw  [fill={rgb, 255:red, 255; green, 255; blue, 255 }  ,fill opacity=1 ] (377,139.25) .. controls (377,135.8) and (379.8,133) .. (383.25,133) .. controls (386.7,133) and (389.5,135.8) .. (389.5,139.25) .. controls (389.5,142.7) and (386.7,145.5) .. (383.25,145.5) .. controls (379.8,145.5) and (377,142.7) .. (377,139.25) -- cycle ;
					
					\draw (112.25,148.9) node [anchor=north west][inner sep=0.75pt]    {$v_{1}$};
					\draw (180.5,148.9) node [anchor=north west][inner sep=0.75pt]    {$v_{2}$};
					\draw (248.75,148.9) node [anchor=north west][inner sep=0.75pt]    {$v_{3}$};
					\draw (317,148.9) node [anchor=north west][inner sep=0.75pt]    {$v_{4}$};
					\draw (385.25,148.9) node [anchor=north west][inner sep=0.75pt]    {$v_{5}$};
					\draw (344,116) node [anchor=north west][inner sep=0.75pt]   [align=left] {\textcolor[rgb]{0.82,0.01,0.11}{{\large ?}}};

				\end{tikzpicture}
				
			}
			\captionsetup{justification=centering, margin=-1cm}
			\caption{The view of $v_2$.}
		\end{subfigure}
		\hspace{2cm}
		\begin{subfigure}[t]{0.35\textwidth}
			\scalebox{0.8}{
				\begin{tikzpicture}[x=0.75pt,y=0.75pt,yscale=-1,xscale=1]
					
					\draw [color={rgb, 255:red, 0; green, 0; blue, 0 }  ,draw opacity=1 ][line width=0.75]  [dash pattern={on 0.84pt off 2.51pt}]  (110.25,139.25) .. controls (184.5,62) and (307.5,59) .. (383.25,139.25) ;
					\draw  [dash pattern={on 0.84pt off 2.51pt}]  (110.25,139.25) .. controls (177.25,190) and (245.25,192) .. (315,139.25) ;
					\draw  [dash pattern={on 0.84pt off 2.51pt}]  (178.5,139.25) .. controls (245.5,190) and (313.5,192) .. (383.25,139.25) ;
					\draw  [dash pattern={on 0.84pt off 2.51pt}]  (110.25,139.25) .. controls (137.25,109) and (213.25,107) .. (246.75,139.25) ;
					\draw  [dash pattern={on 0.84pt off 2.51pt}]  (246.75,139.25) .. controls (273.75,109) and (349.75,107) .. (383.25,139.25) ;
					\draw  [dash pattern={on 0.84pt off 2.51pt}]  (178.5,139.25) .. controls (205.5,109) and (281.5,107) .. (315,139.25) ;
					\draw    (315,139.25) -- (383.25,139.25) ;
					\draw    (178.5,139.25) -- (246.75,139.25) ;
					\draw    (246.75,139.25) -- (315,139.25) ;
					\draw [color={rgb, 255:red, 208; green, 2; blue, 27 }  ,draw opacity=1 ][line width=1.5]  [dash pattern={on 5.63pt off 4.5pt}]  (110.25,139.25) -- (178.5,139.25) ;
					\draw  [fill={rgb, 255:red, 255; green, 255; blue, 255 }  ,fill opacity=1 ] (104,139.25) .. controls (104,135.8) and (106.8,133) .. (110.25,133) .. controls (113.7,133) and (116.5,135.8) .. (116.5,139.25) .. controls (116.5,142.7) and (113.7,145.5) .. (110.25,145.5) .. controls (106.8,145.5) and (104,142.7) .. (104,139.25) -- cycle ;
					\draw  [fill={rgb, 255:red, 255; green, 255; blue, 255 }  ,fill opacity=1 ] (172.25,139.25) .. controls (172.25,135.8) and (175.05,133) .. (178.5,133) .. controls (181.95,133) and (184.75,135.8) .. (184.75,139.25) .. controls (184.75,142.7) and (181.95,145.5) .. (178.5,145.5) .. controls (175.05,145.5) and (172.25,142.7) .. (172.25,139.25) -- cycle ;
					\draw  [fill={rgb, 255:red, 255; green, 255; blue, 255 }  ,fill opacity=1 ] (240.5,139.25) .. controls (240.5,135.8) and (243.3,133) .. (246.75,133) .. controls (250.2,133) and (253,135.8) .. (253,139.25) .. controls (253,142.7) and (250.2,145.5) .. (246.75,145.5) .. controls (243.3,145.5) and (240.5,142.7) .. (240.5,139.25) -- cycle ;
					\draw  [fill={rgb, 255:red, 245; green, 166; blue, 35 }  ,fill opacity=1 ] (308.75,139.25) .. controls (308.75,135.8) and (311.55,133) .. (315,133) .. controls (318.45,133) and (321.25,135.8) .. (321.25,139.25) .. controls (321.25,142.7) and (318.45,145.5) .. (315,145.5) .. controls (311.55,145.5) and (308.75,142.7) .. (308.75,139.25) -- cycle ;
					\draw  [fill={rgb, 255:red, 255; green, 255; blue, 255 }  ,fill opacity=1 ] (377,139.25) .. controls (377,135.8) and (379.8,133) .. (383.25,133) .. controls (386.7,133) and (389.5,135.8) .. (389.5,139.25) .. controls (389.5,142.7) and (386.7,145.5) .. (383.25,145.5) .. controls (379.8,145.5) and (377,142.7) .. (377,139.25) -- cycle ;
					
					\draw (112.25,148.9) node [anchor=north west][inner sep=0.75pt]    {$v_{1}$};
					\draw (180.5,148.9) node [anchor=north west][inner sep=0.75pt]    {$v_{2}$};
					\draw (248.75,148.9) node [anchor=north west][inner sep=0.75pt]    {$v_{3}$};
					\draw (317,148.9) node [anchor=north west][inner sep=0.75pt]    {$v_{4}$};
					\draw (385.25,148.9) node [anchor=north west][inner sep=0.75pt]    {$v_{5}$};
					\draw (145,116) node [anchor=north west][inner sep=0.75pt]   [align=left] {\textcolor[rgb]{0.82,0.01,0.11}{{\large ?}}};

				\end{tikzpicture}
				
			}
			\captionsetup{justification=centering, margin=-1cm}
			\caption{The view of $v_4$.}
		\end{subfigure}
		
		\caption{The edges and non-edges seen by $v_3$, $v_2$ and $v_4$ in $P$, using just the part of $\neighbors$ of the certificates. The edges are represented by the normal edges, the non-edges are represented by the dotted ones, and the red dashed edges represent the unknown edges or non-edges.}
		\label{fig:view of v2, v3, v4}
	\end{figure}
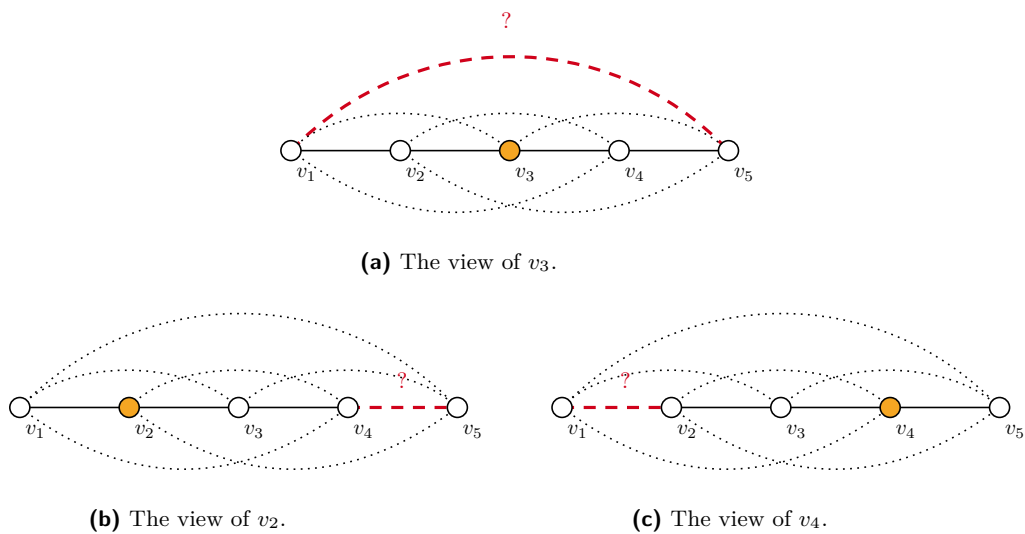
	
	By Lemma~\ref{claim:edges_ancestors}, there exists \mbox{$i_0 \in \{1, \ldots, 5\}$} such that $s_{i_0}$ is an ancestor of $s_i$ for every $i \in \{1, \ldots 5\}$.
	If $\beta({s_{i_0}})$ is a clique of size at least $\sqrt{n}$, then $v_{i_0}$ detects $P$ and rejects (indeed, $P$ is entirely contained in $G_{s_{i_0}}$, and since $\beta(s_{i_0})$ is a clique of size at least $\sqrt{n}$, $v_{i_0}$ knows all the neighborhoods of vertices in $G_{s_{i_0}}$ by looking at the part $\pieces$ of the certificates of vertices in~$\beta({s_{i_0}})$).
	Thus, we can assume that $\beta(s_{i_0})$ is either a $P_3$ or a clique of size at most $\sqrt{n}$ in $G$. In both cases, the neighborhood of every vertex in $\beta(s_{i_0})$ is written in the certificate of all vertices of $\beta(s_{i_0})$.
	For every $i \in \{1, \ldots, 5\}$, $v_i$ is a neighbor of some vertex in $\beta(s_{i_0})$ since $\beta(s_{i_0})$ is a dominating set in $G_{s_{i_0}}$. Thus $v_i$ knows the neighborhood of $v_{i_0}$ (which is written in the certificate of all the vertices of $\beta(s_{i_0})$). In particular, if $i_0 \in \{1,2\}$, then $v_4$ knows the neighborhood of $v_1$ or $v_2$, so $v_4$ detects $P$ entirely (since the only possible unknown edge of $P$ was $v_1v_2$, see Figure~\ref{fig:view of v2, v3, v4}) and rejects. Similarly, if $i_0 \in \{4,5\}$, $v_2$ detects $P$ entirely and rejects.
	Thus, we can assume that $i_0 = 3$. So $v_1, v_2, v_4, v_5$ all know the adjacency of~$v_3$.

	If $s_1$ and $s_5$ are not in the same branch of~$T$, $v_3$ knows the non-edge between $v_1$ and $v_5$, so it detects $P$ and rejects. Thus, we can assume that $s_1$ and $s_5$ are in the same branch.
	By symmetry, assume that $s_1$ is a descendant of $s_5$ (which itself is a descendant of $s_3$). There are finally  two cases to conclude:
	\begin{enumerate}
		\item If $\beta(s_5)$ is a $P_3$ or a clique of size at most~$\sqrt{n}$, then $v_1$ detects the path. Indeed, since $s_1$ is a descendant of $s_5$, $v_1$ knows the adjacency of $v_5$. Since $v_1$ also knows the adjacency of~$v_2$ (written in $\neighbors(v_2)$) and~$v_3$ (because $i_0=3$), it sees all the edges and non-edges of~$P$, so it detects~$P$ and rejects.
		
		\item If $\beta(s_5)$ is a clique of size at least~$\sqrt{n}$, then $v_5$ detects the path. Indeed, since $s_1$ is a descendant of $s_5$, $v_5$ knows the adjacency of $v_1$. Since $v_5$ also knows the adjacency of~$v_4$ (written in $\neighbors(v_4)$) and~$v_3$ (because $i_0=3$), it sees all the edges and non-edges of~$P$, so it detects~$P$ and rejects.
	\end{enumerate}

	Thus, in all cases, at least one vertex detects $P$ and rejects, which concludes the proof.

	\bibliographystyle{plainurl}
	\bibliography{bibli}
	
\end{document}